\newtheorem{thm}{Theorem}
\newtheorem{prop}{Proposition}
\newtheorem{lemma}{Lemma}
\newtheorem{cor}{Corollary}
\newtheorem{definition}{Definition}
\newcommand{\Z}{\mathbb{Z}}
\newcommand{\R}{{R}_{\cal S}}
\newcommand{\Rg}{{\cal R}_{\cal S}}
\newcommand{\C}{\cal C}
\newcommand{\FRp}{(n,\alpha,\rho)}
\newcommand{\abbr}{LBFR } 
\newcommand{\abbrnospace}{LBFR}
\newcommand{\mkw}[1]{\textbf{ \textcolor{blue}{[ #1 --mary ]} } }
\newcommand{\V}{{\cal V}}
\newcommand{\cS}{{\cal S}}
\newcommand{\poS}{\partial_{out}(S)}
\newcommand{\Gp}{G'_{\C}}
\newcommand{\bF}{\mathbb{F}}
\begin{document}



\title{Load-Balanced Fractional Repetition Codes\thanks{A preliminary version of this work appeared at ISIT 2018.} }
%
%
%
\author{Alexandra Porter, Shashwat Silas, and Mary Wootters\thanks{AP is partially supported by the National Science Foundation Graduate Research Fellowship under Grant No. DGE-1656518.  SS is partially supported by a Google Graduate Fellowship in Computer Science. SS, AP and MW are partially supported by NSF Grants CCF-1657049, CCF-1814629 and CCF-1844628. }}

%
%

\markboth{Journal }%
{Porter, Silas and Wootters: Embedded Index Coding}
%



\maketitle

\begin{abstract}
We introduce \em load-balanced fractional repetition \em (\abbrnospace) codes, which are a strengthening of fractional repetition (FR) codes.  \abbr codes have the additional property that multiple node failures can be sequentially repaired by downloading no more than one block from any other node. This allows for better use of the network, and can additionally reduce the number of disk reads necessary to repair multiple nodes. 
We characterize \abbr codes in terms of their adjacency graphs, and use this characterization to present explicit constructions of \abbr codes with storage capacity comparable to existing FR codes.  Surprisingly,
in some parameter regimes, our constructions of \abbr codes match the parameters of the best constructions of FR codes.
\end{abstract}


\section{Introduction}

We would like to store a file in a distributed manner across n storage nodes (servers) with the following objective: in the event of the failure of some nodes, the system can be efficiently repaired without data loss. In this paper, we focus on two notions of ``efficient:" \em network bandwidth \em and \em disk access. \em  That is, when a node (or a collection of nodes) fails, we would like to minimize the amount of data that is read and sent by surviving nodes in order to repair the system. This problem has been well-studied in the area of regenerating codes, e.g.~\cite{dimakis2010network,rashmi2011optimal}.  Fractional repetition (FR) codes are a special case of regenerating codes which have many desirable properties. In this work, we present \em load-balanced fractional repetition \em (\abbrnospace) codes, which are a strengthening of FR codes.

Fractional repetition codes have been well-studied since they were introduced in 2010; see~\cite{el2010fractional, zhu2014low,park2016construction,zhu2014replication,yu2014irregular,zhu2015heterogeneity,silberstein2015optimal}. FR codes use the minimum possible network bandwidth and disk access (the amount of data read by the surviving nodes in order to be sent) to repair node failures.  They have a computationally inexpensive repair process known as uncoded or exact repair which is essentially data copying from surviving nodes. Finally, there are constructions of FR codes which have simple combinatorial descriptions. The literature on FR codes has focused on the case of repairing a single node failure, while for practical and mathematical considerations, it is also interesting to think about regenerating codes in the case of multiple node failures. Indeed, multiple failures have been studied in cooperative regenerating codes~\cite{shum2013cooperative} and lazy repair schemes for regenerating codes~\cite{kermarrec2011repairing} in the past. We show that there are FR codes which can repair multiple node failures in a load balanced manner using a simple algorithm.

LBFR codes have a simple combinatorial characterization as a subset of FR codes and they have all the same benefits of FR codes for the repair of a single node. Additionally, they can handle multiple failures while load balancing the work of the helper nodes involved in the repair process. Indeed, the repair of multiple failures uses a particularly simple greedy algorithm and perhaps surprisingly we even find that in some parameter settings there are LBFR codes with the same storage capacity as optimal FR codes. These codes make better use of the network during repair, and as discussed below can improve the number of disk accesses required during the repair of multiple nodes and the time it takes to do repair in certain communication models.

\subsection{Fractional repetition codes} \em Fractional repetition \em codes were introduced in~\cite{el2010fractional} as a generalization of the codes in~\cite{shah2012distributed}.  
FR codes work by replicating data.  
A single data block $B$ is replicated $\rho$ times and each copy is stored on a different node; each node stores $\alpha$ blocks.  When a node fails it is repaired using copies of each of its $\alpha$ blocks stored elsewhere on the system. 
Formally, an FR code is defined as follows:
\begin{definition}[~\cite{el2010fractional}]\label{def:fr}
A $(n,\alpha,\rho)$ fractional repetition code is a collection of $n$ subsets $N_1,...,N_n$ of $[\theta]:=\{1,2,...,\theta\}$ with $n\alpha = \rho \theta$ such that:
\begin{enumerate}
\item $|N_i| = \alpha$ for $i = 1,...,n$;
\item For each $b \in [\theta]$, $|\{N_i: b \in N_i\}| = \rho$; $\rho$ is called the repetition degree.
\end{enumerate}
\end{definition}
Because each block is replicated $\rho$ times, a $(n, \alpha, \rho)$ FR code can tolerate up to $\rho - 1$ node failures.
Additionally, most constructions of FR codes naturally have a \em load-balancing \em property 
which ensures that a single failed node can be repaired by downloading a single block from each of $\alpha$ surviving nodes (as opposed to $\alpha$ blocks from a single node).  In this work, we'll extend this latter property to $\rho-1$ failures.

FR codes can be used as a building block to obtain codes---called \em DRESS codes\em\cite{el2010fractional}---with the MDS property, and with optimal disk access and network bandwidth.  This is shown in Figure~\ref{fig:dresscode}: an FR code is concatenated with an outer maximum distance separable (MDS) code.  
  Suppose that any $k$ nodes in the FR code contain at least $M$ distinct blocks between them.
%
We choose an outer MDS code which encodes $M$ data blocks $x_1,\ldots, x_M$ as $\theta$ encoded blocks $B_1,\ldots,B_\theta$, so that the concatenated code can recover the whole file $\vec{x}$ from any $k$ nodes.
%
For an FR code $\cal C$, let $M_{\cal C}(k)$ be the minimum number of unique data blocks contained in any set of $k \leq \alpha$ nodes.  
We call $M_{\cal C}(k)$ the \em storage capacity \em of $\C$.

\begin{figure}
\begin{center}
\footnotesize
\begin{tikzpicture}[scale=0.4]
\draw (0,0) rectangle (1,5);
\foreach \i in {1,...,4}
{
	\draw (0,\i) -- (1,\i);
}
\node at (0.5, 0.5) {$x_M$};
\node at (0.5, 4.5) {$x_1$};
\node at (0.5, 3.5) {$x_2$};
\draw[->] (1.5, 2.5) to node[above] {MDS code} (4.5, 2.5);
\begin{scope}[xshift=5cm,yshift=1cm]
	\draw (0,-2) rectangle (1,6);
	\foreach \i in {-1,...,6}
	{
	\draw (0,\i) -- (1,\i);
	}
	\node at (0.5, -1.5) {$B_\theta$};
	\node(b1) at (0.5, 5.5) {$B_1$};
	\node(b2) at (0.5, 4.5) {$B_2$};
\end{scope} 
\node at (8, 7) {FR code};
\node at (8, 3) {\begin{minipage}{1.5cm} \begin{center} \footnotesize Each block is replicated on $\rho$ nodes \end{center} \end{minipage} };
\begin{scope}[xshift=10cm,yshift=1cm]
\foreach \i in {-2, 0, 2, 4}
{
 \draw[thick] (0,\i) rectangle (2, \i + 1.5);
}
\draw (.5,4.2) rectangle (1.5,5.2);
\node(a1) at (1, 4.7) {$B_1$};
\draw[->] (b1.east) --(a1);
\draw (0,2.3) rectangle (1,3.3);
\node(c1) at (.5, 2.8) {$B_1$};
\draw[->] (b1.east) --(c1);

\draw [decorate,decoration={brace,amplitude=10pt,mirror},xshift=4pt,yshift=0pt]
(2,-2) -- (2,5.5) node [black,midway,xshift=1.5cm] {\begin{minipage}{2cm}\begin{center}$n$ nodes,\\ each storing $\alpha$ blocks \end{center}\end{minipage}};

\end{scope}
\end{tikzpicture}
\end{center}
\caption{A DRESS code is obtained by concatenating an outer MDS code with an FR code.  The FR code replicates each block $B_i$ $\rho$ times, and stores each copy on a separate storage node; each node holds $\alpha$ blocks.}
\label{fig:dresscode}
\end{figure}
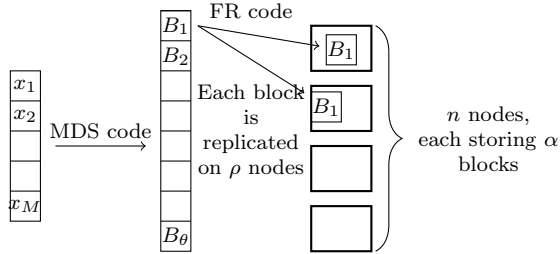

The goal is to store as much information as possible: that is, to make $M$ as large as possible given $k, \alpha,$ and $\rho$.  This parameter $M$ is called the \em storage capacity \em of the system.  
The fundamental trade-offs here are well-studied.  In the case where \em any \em $\alpha$ nodes should be able to repair a failed node, this is a special case of Minimum Bandwidth Regenerating codes, and the cut-set bound~\cite{dimakis2010network} gives an upper bound on $M$.  FR codes which meet or exceed this bound are called \em universally good \em\cite{el2010fractional}.

\subsection{Motivation}\label{sec:motivation}


\begin{figure}
\begin{center}
\footnotesize
\begin{tikzpicture}[scale=.33]
\begin{scope}
\begin{scope}
\draw (0,0) rectangle (1,2);
\node at (.5, .5) {2};
\node at (.5, 1.5) {1};
\node at (.5, 2.5) {$N_1$};
\draw[red,thick] (-.2,2.2) to (1.2, -.2);
\end{scope}
\begin{scope}[xshift=1.2cm]
\draw (0,0) rectangle (1,2);
\node at (.5, .5) {4};
\node at (.5, 1.5) {1};
\node at (.5, 2.5) {$N_2$};
\end{scope}
\begin{scope}[xshift=2.4cm]
\draw (0,0) rectangle (1,2);
\coordinate(y) at (.5,0);
\node at (.5, .5) {6};
\node at (.5, 1.5) {1};
\node at (.5, 2.5) {$N_3$};
\end{scope}
\begin{scope}[xshift=3.6cm]
\draw (0,0) rectangle (1,2);
\node at (.5, .5) {3};
\node at (.5, 1.5) {2};
\node at (.5, 2.5) {$N_4$};
\end{scope}
\begin{scope}[xshift=4.8cm]
\draw (0,0) rectangle (1,2);
\coordinate(x) at (.5,0);
\node at (.5, .5) {6};
\node at (.5, 1.5) {2};
\node at (.5, 2.5) {$N_5$};
\end{scope}
\begin{scope}[xshift=6.0cm]
\draw (0,0) rectangle (1,2);
\node at (.5, .5) {4};
\node at (.5, 1.5) {3};
\node at (.5, 2.5) {$N_6$};
\end{scope}
\begin{scope}[xshift=7.2cm]
\draw (0,0) rectangle (1,2);
\node at (.5, .5) {5};
\node at (.5, 1.5) {3};
\node at (.5, 2.5) {$N_7$};
\end{scope}
\begin{scope}[xshift=8.4cm]
\draw (0,0) rectangle (1,2);
\node at (.5, .5) {5};
\node at (.5, 1.5) {4};
\node at (.5, 2.5) {$N_8$};
\coordinate(w) at (.5, 0);
\end{scope}
\begin{scope}[xshift=9.6cm]
\draw (0,0) rectangle (1,2);
\node at (.5, .5) {6};
\node at (.5, 1.5) {4};
\node at (.5, 2.5) {$N_9$};
\draw[red,thick] (-.2,2.2) to (1.2, -.2);
\end{scope}

\begin{scope}[xshift=0cm,yshift=-5cm]
\draw (0,0) rectangle (1,2);
\node at (.5, .5) {2};
\node at (.5, 1.5) {1};
\coordinate (z1) at (.5, 2);
\coordinate (z2) at (1, 2);
\node at (.5, -.7) {$N_1'$};
\draw[->] (y) to node[pos=0.5, above] {1} (z1);
\draw[->] (x) to node[pos=0.5, above] {2} (z2);
\end{scope}

\begin{scope}[xshift=9.6cm,yshift=-5cm]
\draw (0,0) rectangle (1,2);
\node at (.5, .5) {6};
\node at (.5, 1.5) {4};
\coordinate (z1) at (.5, 2);
\coordinate (z2) at (0, 2);
\node at (.5, -.7) {$N_9'$};
\draw[->] (w) to node[pos=0.5, right] {4} (z1);
\draw[->] (x) to node[pos=0.5, above] {6} (z2);
\end{scope}

\node at (5.5,-6) {(a)};

\end{scope}
\begin{scope}[xshift=12cm]
\begin{scope}
\draw (0,0) rectangle (1,2);
\node at (.5, .5) {2};
\node at (.5, 1.5) {1};
\node at (.5, 2.5) {$N_1$};
\draw[red,thick] (-.2,2.2) to (1.2, -.2);
\end{scope}
\begin{scope}[xshift=1.2cm]
\draw (0,0) rectangle (1,2);
\node at (.5, .5) {4};
\node at (.5, 1.5) {1};
\node at (.5, 2.5) {$N_2$};
\end{scope}
\begin{scope}[xshift=2.4cm]
\draw (0,0) rectangle (1,2);
\node at (.5, .5) {6};
\node at (.5, 1.5) {1};
\node at (.5, 2.5) {$N_3$};
\coordinate(y) at (.5, 0);
\end{scope}
\begin{scope}[xshift=3.6cm]
\draw (0,0) rectangle (1,2);
\node at (.5, .5) {3};
\node at (.5, 1.5) {2};
\node at (.5, 2.5) {$N_4$};
\coordinate(x) at (.5, 0);
\end{scope}
\begin{scope}[xshift=4.8cm]
\draw (0,0) rectangle (1,2);
\node at (.5, .5) {5};
\node at (.5, 1.5) {2};
\node at (.5, 2.5) {$N_5$};
\draw[red,thick] (-.2,2.2) to (1.2, -.2);
\end{scope}
\begin{scope}[xshift=6.0cm]
\draw (0,0) rectangle (1,2);
\node at (.5, .5) {4};
\node at (.5, 1.5) {3};
\node at (.5, 2.5) {$N_6$};
\end{scope}
\begin{scope}[xshift=7.2cm]
\draw (0,0) rectangle (1,2);
\node at (.5, .5) {6};
\node at (.5, 1.5) {3};
\node at (.5, 2.5) {$N_7$};
\end{scope}
\begin{scope}[xshift=8.4cm]
\draw (0,0) rectangle (1,2);
\node at (.5, .5) {5};
\node at (.5, 1.5) {4};
\node at (.5, 2.5) {$N_8$};
\coordinate(w) at (.5, 0);
\end{scope}
\begin{scope}[xshift=9.6cm]
\draw (0,0) rectangle (1,2);
\node at (.5, .5) {6};
\node at (.5, 1.5) {5};
\node at (.5, 2.5) {$N_9$};
\end{scope}

\begin{scope}[xshift=0cm,yshift=-3cm]
\draw (0,0) rectangle (1,2);
\node at (.5, .5) {2};
\node at (.5, 1.5) {1};
\coordinate (z1) at (1, 1.8);
\coordinate (z2) at (1, 1);
\coordinate (xx) at (1,.25);
\node at (.5, -.7) {$N_1'$};
\draw[->] (y) to[out=-90,in=0] node[pos=0.5, above] {1} (z1);
\draw[->] (x) to[out=-90,in=0] node[pos=0.5, below] {2} (z2);
\end{scope}

\begin{scope}[xshift=4.8cm,yshift=-6cm]
\draw (0,0) rectangle (1,2);
\node at (.5, .5) {5};
\node at (.5, 1.5) {2};
\coordinate (z1) at (1, 2);
\coordinate (z2) at (0, 1.5);
\node at (.5, -.7) {$N_5'$};
\draw[->] (w) to node[pos=0.5, right] {5} (z1);
\draw[->] (xx) to node[pos=0.5, above] {2} (z2);
\end{scope}
\node at (2,-6) {(b)};
\end{scope}
\end{tikzpicture}
\end{center}
\caption{(a) A (9,2,3)-FR code repairs $N_1$ and $N_9$.  (b) A (9,2,3)-\abbr code repairs $N_1$ and $N_5$.  Both satisfy $|N_i \cap N_j| \leq 1$ for all $i \neq j$ and hence are universally good.}
\label{fig:frexamples2}
\end{figure}
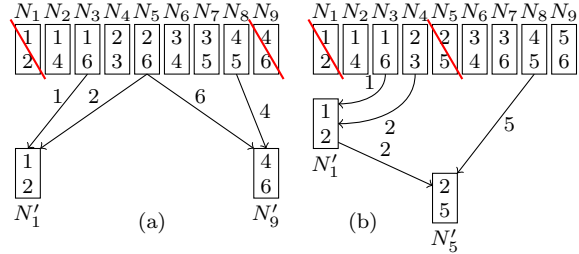

Before formally defining \abbr codes, we first motivate them by describing the property that we want them to have.
Consider the problem of greedily repairing multiple failed nodes in a system.
That is, if a node $N_i$ fails, we would like to repair it by greedily searching the system for any existing copies of the blocks that $N_i$ needs, without taking into account the full set $F$ of failed nodes. In this model, standard FR codes have some drawbacks, which we illustrate below. 

Figure~\ref{fig:frexamples2}(a) shows a universally good $(9,2,3)$-FR code with $9$ nodes and $6$ data blocks.
If a single node (say, $N_1$) fails, then it can be repaired by downloading one block from each of two different nodes.
However, suppose that both $N_1$ and $N_9$ fail simultaneously, and that $N_1$ is greedily repaired first using $N_3$ and $N_5$.  
Then there are no remaining copies of block $B_6$ in unused nodes, and either $N_3$ or $N_5$ (in the figure, it is $N_5$) must send a second block to $N_9'$ to repair $N_9$.

Is it possible to come up with a code which allows for the repair of $\rho-1$ failures, downloading just one symbol from each surviving block?
It is not hard to see that, in this model, the answer is no: suppose that the $\rho -1$ failed nodes were all holding block $B_1$.  Then all of these $\rho-1$ would need to contact the same surviving node to obtain $B_1$.
However, if the repair is allowed to be \em sequential\em---meaning that previously repaired nodes are allowed to be used as helper nodes themselves---then in fact this stronger load-balancing property is possible, even with greedy repair.

\abbr codes (defined below) are defined precisely to allow for greedy sequential repair of multiple failed nodes with disjoint repair groups.
Figure~\ref{fig:frexamples2}(b) shows an \abbr code with the same parameters as the FR code in Figure~\ref{fig:frexamples2}(a).   
If nodes $N_1$ and $N_9$ fail, it can be checked that no matter how $N_1$ is repaired, $N_9$ can still be repaired using a disjoint repair group (we will explain how to verify this in Section~\ref{sec:lbfr}), and this holds for any pair of nodes with $|N_i \cap N_j| = 0$.
The other case 
(shown in Figure~\ref{fig:frexamples2}(b)) is when two nodes with overlapping contents fail---say, $N_1$ and $N_5$, which share block $B_2$.  In this case, there is no non-sequential way to repair both nodes with disjoint repair groups, since only one copy of block $B_2$ remains.  However, if we are allowed to use $N_1'$ as a helper node after it is repaired, then we \em can \em repair $N_5$ with disjoint repair groups, as shown.

Our main result is that good \abbr codes exist and moreover
 that there are some parameter regimes where the parameters of FR codes match those of \abbr codes. 
%
There are (at least) two reasons why the LBFR property is desirable.

\begin{enumerate} 
\item \textbf{Reading less from disk when repairing multiple nodes.}  Our first motivation is that the load-balancing property, even if it comes at the cost of sequential repair, could result in reading less from disk.
For example, in Figure~\ref{fig:frexamples2}(a), $N_5$ must read two blocks, $B_2$ and $B_6$.
However in Figure~\ref{fig:frexamples2}(b), $N_5$ reads only one block, which is sent to $N_1'$.  Now, $N_1'$ does not need to first write and then read this block before sending it on to $N_5'$; rather it can send and then write.\footnote{Moreover, the situation depicted in Figure~\ref{fig:frexamples2}(a) can be adapted to the parallel setting, with the guarantee that $N_4$ reads only one block which it sends to both $N_1'$ and $N_5'$.} This saves a single disk access in this set-up, and in general when making $\rho - 1$ repairs it can save as many as $\rho - 2$ disk accesses.
\item \textbf{Faster repair with subpacketization.}
Our second motivation, which we discuss more in Section~\ref{sec:subpack}, comes from the speed of repair schemes in a particular communication model.   Suppose that each data block is actually made of of $T$ packets, and each node can send one packet per time step.  If a failed node (say, $N_1'$ in Figure~\ref{fig:frexamples2}(b)), is allowed to begin forwarding information before it is completely repaired, then it can forward the first packet in block $2$ to $N_5'$ before it has received the rest of the packets.  This allows for a ``pipelined'' repair which can be faster.  We show in Section~\ref{sec:subpack} that in this model, LBFR codes can complete repair faster than general FR codes in the worst case.
\end{enumerate}

%
Finally, we note that existing constructions of FR codes do not in general have this \abbr property.  For example, it is not hard to check that the FR codes obtained from Steiner systems in~\cite{el2010fractional} (for example, shown in Figure 5 in that work) are not \abbr codes. 

\subsection{Results}
The following are the main results of this work.
\begin{enumerate}
\item 
Our main technical contribution is Theorem~\ref{theorem:frpgraphs}, which characterizes \abbr codes precisely in terms of their \em adjacency graphs. \em    More precisely, we show that \abbr codes are those whose adjacency graphs contain no cycles of length four or six.  We note that Dimakis et al.~\cite{dimakis2014batch} use graphs with this property in a different way to obtain \em batch codes. \em  We discuss the relationship more in Section~\ref{sec:relatedwork}.  
\item  
In Theorem~\ref{theorem:good}, we show that \abbr codes are automatically \em univerally good, \em if they exist.  That is, they obtain storage capacity at least as good (in fact, strictly better) than the best Minimum Bandwidth Regenerating (MBR) codes.  

\item  In Section~\ref{sec:subpack}, we further explore the application of LBFR codes to speeding up repair in the presence of sub-packetization.  We show that in an appropriate model where each block is made up of $T$ packets and one packet can be sent in each timestep, it takes time $T(1 + o(1))$ timesteps (where $T \to \infty$) to complete a repair with LBFR codes.  
In contrast, there are FR codes where the number of timesteps required is greater by at least a constant factor.

\item Finally, in Section~\ref{sec:frpconstruct}, we give explicit constructions of \abbr codes---based on our characterization and existing constructions of girth-eight graphs---and establish that there are parameter regimes in which \abbr codes exist with the same storage capacity as optimal FR codes, making them optimal as well.
\end{enumerate}

\subsection{Outline}
After a brief discussion of related work in Section~\ref{sec:relatedwork}, we formally define \abbr codes and prove our main theorems about them in Section~\ref{sec:lbfr}.  Theorem~\ref{theorem:frpgraphs} characterizes \abbr codes in terms of their adjacency graphs.  Theorem~\ref{theorem:good} shows that \abbr codes are universally good.  In Section~\ref{sec:subpack} we discuss the details of repair algorithms for \abbr codes with sub-packetization.
Finally, in Section~\ref{sec:frpconstruct}, we give explicit constructions of \abbr codes.

\section{Background and related work}\label{sec:relatedwork}
Since FR codes were introduced by El Rouayheb and Ramchandran in~\cite{el2010fractional}, there have been several constructions in that paper and follow-up works extending the possible parameter regimes; see~\cite{olmez2016fractional} and the references therein.
The goal of these works is to create FR codes $\mathcal{C}$ in a variety of parameter regimes which lead to DRESS codes with the largest possible storage capacity $M_{\C}(k)$. 
To understand what the limitations on $M_{\C}(k)$ are, \cite{el2010fractional} considered the cut-set bound of~\cite{dimakis2010network}, a general lower bound which applies in particular to Minimum Bandwidth Regenerating (MBR) codes.
%
%
The cut-set bound implies that the storage capacity $C_{MBR}(n,k,\alpha)$ for MBR codes is equal to  $k\alpha - {k \choose 2}.$
The work \cite{el2010fractional} defined an FR code $\C$ to be \em universally good \em if $M_{\C}(k) \geq C_{MBR}(n,k,\alpha)$.
However, it is possible for $M_{\C}(k)$ to be strictly greater than $C_{MBR}(n,k,\alpha)$.  This is because FR codes allow for \em table-based \em repair: they guarantee that there exists a set of $\alpha$ helper nodes which can repair any failure, while MBR codes have the stronger guarantee that \em any \em set of $\alpha$ nodes will work. 
%
Given that the storage capacity for FR codes can exceed $C_{MBR}(n,k,\alpha)$, it is natural to ask what the optimal storage capacity is.  The work of~\cite{el2010fractional} provides two upper bounds.  The tighter of the two is a recursive upper bound, which establishes that
%
$M_{\C}(k) \leq g(k)$, where $g(k)$ is defined recursively by
$g(1) = d$ and 
\[g(k+1) = g(k) + d - \left \lceil \frac{\rho g(k) -kd}{n-k}\right \rceil.\]
The work of~\cite{silberstein2015optimal} gave constructions of FR codes matching this bound, showing that it is tight in some parameter regimes. 
Following~\cite{silberstein2015optimal}, we say that an FR code $\C$ is \emph{$k$-optimal} if $M_{\C}(k) = g(k)$, and \em optimal \em if this holds for all $k \leq \alpha$.


There are many constructions of FR codes in a wide range of parameter regimes.
In~\cite{zhu2014low}, universally good FR codes are constructed from Group Divisible Designs and in~\cite{park2016construction} construction methods based on perfect difference families and quasi-perfect  difference families are used to achieve more parameters for FR codes. 

More recent work, including~\cite{zhu2014replication,yu2014irregular,zhu2015heterogeneity}, has focused on constructing FR codes for systems with heterogeneous repetition degree, node storage capacity, and link costs. Yu et al.~\cite{yu2014irregular} address the problem of heterogeneity in the storage capacities and usage costs for nodes. They also consider network links with variable bandwidth, cost and rate. They define Irregular Fractional Repetition codes, and show the problem is an integer linear program. Zhu et al.~\cite{zhu2014replication} consider a model in which packets have one of two possible repetition degrees, motivated by a desire to have more popular packets more available due to higher repetition. They use a systematic outer MDS code, and give the data blocks the higher repetition degree while parity blocks have lower repetition, a strategy to reduce overall overhead in data reconstruction. More generally, in~\cite{zhu2015heterogeneity} repetition degrees can take any value greater than or equal to some value $\rho$, and node capacities also vary. 

Some work has been done on how FR codes can scale to larger systems. In~\cite{koo2011scalable}, constructions of cage graphs are used to create larger FR codes, in particular for when nodes have a much larger capacity than the repetition degree of the system. The code construction in~\cite{olmez2013constructions} uses Kronecker products to build larger FR codes. In~\cite{xu2013expander}, expander graphs are used to create and scale a code, though not an FR code. Most recently, in~\cite{itani2017dynamic} a genetic algorithm approach was used to solve for optimal block distributions on a system with possible data and node arrivals and heterogeneous block transfer costs.

There are also many notions in coding theory which are related to \abbr codes, including batch codes, PIR codes, Fractional Repetition Batch (FRB) codes, and uniform Erasure Combinatorial Batch Codes (ECBCs). 
The most similar to our work are FRB codes and Uniform ECBCs~\cite{silberstein2015fractional}.  Both of these notions
require that any $t$ data symbols can be recovered from $t$ distinct nodes.
ECBCs guarantee this even when $\rho-1$ nodes have failed. FRB codes have this batch property in addition to being an FR code and guaranteeing the file can be reconstructed from any $k$ nodes, but the batch property does not necessarily hold in the presence of failures.
Thus neither solve our motivating problem of doing repair in a load-balanced way when multiple nodes have failed.

Pliable Fractional Repetition Codes have also been recently studied, in which node storage capacity and block repetition degree can change over time~\cite{su2018pliable,su2018optimal}. In~\cite{su2018optimal}, the girth of the adjacency graph is used, defined as in our paper.  Those works apply the girth to analyze storage capacities, showing  conditions on girth and its relationship to storage capacity under which file size reaches optimal. They also show how a particular construction of Pliable Fractional Repetition Codes is also a Fractional Repetition Batch Code.

The specific case of graphs without four- or six-cycles have been studied before in the context of coding theory.  Dimakis et al.~\cite{dimakis2014batch} use graphs with this property to construct batch codes.  However, both their use of these graphs and the parameter regime that they are interested is quite from ours, reflecting the fact that their goal (primitive multiset batch codes) is different from our goal of LBFR codes.  In particular, they use these bipartite graphs to define parity checks for their codes, while we use them to determine how to replicate data.

Finally, there has been work on sequentially repairing multiple failures in distributed storage~\cite{prakash2014codes}.
Such work is in the context of locally repairable codes, and to the best of our knowledge, sequential repair has not been studied for FR codes before. 
%

%


\section{\abbr codes}\label{sec:lbfr}
We begin with some notation in order to formally define \abbr codes.
Suppose a set of nodes $F = \{N_1,N_2,...,N_{\rho-1}\}$ fails.  
%
A \emph{repair sequence} ${\cal S}$ for $F$ is defined as a sequence of blocks sent in the process of repairing the nodes in $F$. 
Formally, we write
 ${\cal S} = [(B_j,\R(B_j,N_i))]_{j = 1}^{\alpha(\rho-1)}$, where $\R(B_j,N_i)$ is the node which sends data block $B_j$ to the replacement node $N_i'$ for $N_i$ in the sequence.  We assume that the blocks corresponding to a node $N_i \in F$ are grouped together in $\mathcal{S}$, 
so $N_i = \{B_{(i-1)\alpha +j}: 1\leq j \leq \alpha\} $. 
Thus, each failed node corresponds to a contiguous group of $\alpha$ blocks in the sequence.
We define the \emph{repair group} for a node in a repair sequence $\cal S$ as $\Rg(N_i) = \{\R(B_j,N_i) \,|\, B_j \in N_i\}$. A repair sequence is \emph{valid} if for any $N_i$ that failed, $\R(B_j,N_i)\neq N_l \in F$ for $l>i$, meaning the sequence does not require use of a failed and un-repaired node.   We say that a repair sequence has \em disjoint repair groups \em if the sets $\Rg(N_i)$ are disjoint for $i \neq j$, meaning that no helper node has to send information to more than one replacement node.
%
With this notation in place, we define an \abbr code as follows.
\begin{definition}\label{def:frplus}
A $\FRp$ \abbr code $\C$, is a set of subsets $N_1,...,N_n$ of the set $\{B_1,...,B_{\theta}\}$ such that $n\alpha = \theta \rho$ where each $N_i$ for $i \in [n]$ corresponds to a node and each $B_j$ for $j \in [\theta]$ corresponds to a data block. $\C$ then has the properties:
 \begin{enumerate}
 \item $|N_i| = \alpha$ for $i = 1,...,n$; \label{frdef:size}
\item For each $B_j$, $j \in [\theta]$, $|\{N_i: B_j \in N_i\}| = \rho$;  \label{frdef:rep}
\item For any multi-set of $\rho-1$ failed nodes $\{N_1,...,N_{\rho-1}\}$,
and for all $1 \leq t < \alpha(\rho-1)$, the following holds:
For any valid repair sequence ${\cal S}_t = [B_j,\R(B_j,N_{\lceil j/\alpha\rceil})]_{j = 1}^t$ such that no node $\R(B_j,N_{\lceil j/\alpha\rceil})$ is repeated, there exists a node $\R(B_{t+1},N_{\lceil (t+1)/\alpha\rceil})$ not in ${\cal S}_t$ to repair the next block. 
\label{frdef:repair}
\end{enumerate}

\end{definition}

The first two properties are identical to those of Definition~\ref{def:fr}.  The third property guarantees precisely that \abbr codes can greedily sequentially repair any sequence of up to $\rho - 1$ failed nodes using disjoint repair groups. 

\subsection{Adjacency Graph Characterization}\label{ssec:graph}
We define the \em adjacency graph \em for an FR code (and hence also for an \abbr code) as follows. 
\begin{definition}
For a $(n,\alpha,\rho)$ fractional repetition code $\C$, the adjacency graph $G_{\C}$ is an undirected bipartite graph $({\cal V}_1,{\cal V}_2,E)$ such that $|{\cal V}_1|  = n$ and $|{\cal V}_2| = \theta$. Relate $G_{\C}$ to $\C$ by the function $v$ such that $v({N_i})\in {\cal V}_1$ corresponds to node $N_i$ and $v(B_j) \in {\cal V}_2$ corresponds to block $B_j$. Then $(v(N_i),v(B_j)) \in E(G_{\C})$ if and only if $B_j \in N_i$. 
\end{definition}
The adjacency graph $G_{\C}$ for FR code $\C$ is bi-regular with $d(v(N_i)) = \alpha$ for all $v(N_i) \in {\cal V}_1$ and $d(v(B_j)) = \rho$ for all $v(B_j) \in {\cal V}_2$.
Above, $d(v)$ denotes the degree of a vertex $v$ in the graph; $\Gamma(v)$ will denote the neighborhood of $v$ in $G_{\C}$.


\begin{thm}\label{theorem:frpgraphs}
An FR $\C$ described by adjacency graph $G_{\C}$ is a \abbr code if and only if $G_{\C}$ has no $4$- or $6$-cycles.\end{thm}
Theorem~\ref{theorem:frpgraphs} shows how we can check that the code shown in Figure~\ref{fig:frexamples2}(b) is indeed an \abbr code; indeed, it is straightforward to verify that the associated graph has no $4$-  or $6$-cycles.

We prove the two directions of Theorem~\ref{theorem:frpgraphs} in Lemmas~\ref{lemma:graphtoFR} and \ref{lemma:frptograph} below.
Before we prove the theorem, we note that this may be surprising.  Graphs without $4$- or $6$-cycles have been useful in constructing batch codes~\cite{dimakis2014batch}; however, those connections are not a tight characterization.  In contrast, Theorem~\ref{theorem:frpgraphs} shows that the load-balancing property that we are after---a very operational definition---is in fact \em equivalent \em to this very combinatorial statement. 
\begin{lemma}\label{lemma:graphtoFR}
Let $G_{\C} = ({\cal V}_1,{\cal V}_2,E)$ be a bi-regular bipartite graph without $4$- or $6$-cycles, such that $|{\cal V}_1| = n$, $|{\cal V}_2| =\theta$, with left degree $\alpha$ and right degree $\rho$. Then $G_{\C}$ is the adjacency graph of a $\FRp$ \abbr code $\C$.
\end{lemma}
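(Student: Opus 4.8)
The plan is to verify the three defining properties of Definition~\ref{def:frplus} directly from the graph. Properties~\ref{frdef:size} and~\ref{frdef:rep} are immediate: since $G_\C$ is bi-regular with left degree $\alpha$ and right degree $\rho$, the neighborhood of each $v(N_i)$ has size $\alpha$, so $|N_i| = \alpha$; each block lies in exactly $\rho$ nodes; and counting edges two ways gives $n\alpha = \theta\rho$. All the work is in Property~\ref{frdef:repair}. So I would fix a set of failed nodes, which by the grouping convention are repaired in the order $N_1, N_2, \ldots, N_{\rho-1}$, block by block. Consider a generic moment: we have already carried out a valid partial repair sequence $\mathcal{S}_t$ whose helper nodes are all distinct, and we are about to repair the next block, some block $\beta$ belonging to the current node $N_i$ (so $i = \lceil (t+1)/\alpha\rceil$). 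We must exhibit a node that holds $\beta$, is not a failed-and-unrepaired node, and has not yet been used as a helper in $\mathcal{S}_t$.

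First I would record the combinatorial meaning of the two forbidden cycles. A $4$-cycle is exactly a pair of nodes sharing two blocks, so ``no $4$-cycle'' says $|N_a \cap N_b|\le 1$ for all $a\neq b$ (equivalently, any two blocks are co-held by at most one node). The plan is then a counting argument on the $\rho$ nodes that hold $\beta$. One of them is $N_i$ itself (the recipient), which we set aside; it suffices to show that strictly fewer than $\rho-1$ of the remaining $\rho-1$ holders are blocked, where a holder is blocked if it is either failed-and-unrepaired or already used as a helper in $\mathcal{S}_t$.

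The heart of the argument is to bound the two kinds of blocked holders separately. For already-used helpers: if a helper $H$ used so far holds $\beta$, then $H$ cannot have been used to repair another block of $N_i$ itself, since then $H$ and $N_i$ would share two blocks, a $4$-cycle; and for each previously completed failed node $N_j$ with $j<i$, at most one helper used during the repair of $N_j$ can hold $\beta$. Indeed, two such helpers $H_1\neq H_2$ repairing distinct blocks $\gamma_1,\gamma_2$ of $N_j$ and both holding $\beta$ produce the $6$-cycle $H_1 - \gamma_1 - N_j - \gamma_2 - H_2 - \beta - H_1$ (the degenerate case $\beta\in\{\gamma_1,\gamma_2\}$ collapsing instead to a $4$-cycle). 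Hence the number $u$ of used helpers holding $\beta$ is at most $i-1$. For failed-and-unrepaired holders: since $N_1,\ldots,N_{i-1}$ are already repaired and $N_i$ is the recipient, any such holder lies among $N_{i+1},\ldots,N_{\rho-1}$, so their count $a$ satisfies $a \le \rho-1-i$. Adding, $u + a \le (i-1) + (\rho-1-i) = \rho-2$, so at least one of the $\rho-1$ other holders of $\beta$ is an unused, valid helper, which is exactly the node required by Property~\ref{frdef:repair}.

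I expect the $6$-cycle step to be the main obstacle, both in spotting that the right quantity to bound is ``used helpers per already-repaired node'' and in seeing that this telescopes with the failed-and-unrepaired count to the exact value $\rho-2$, rather than the useless bound $\rho-1$ one gets from the trivial observation that all blocked holders sit among the $\rho$ holders of $\beta$. A secondary point needing care is the degenerate case in which $\beta$ is itself a block repaired during $N_j$'s turn, where the hexagon degenerates and one falls back on the $4$-cycle condition; and some light bookkeeping is needed to confirm that an unused repaired node $N_j$ with $j<i$ is a genuinely valid helper and that the multiset and ordering conventions of Definition~\ref{def:frplus} are respected throughout.
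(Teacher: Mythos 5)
Your proposal is correct and follows essentially the same route as the paper's proof: induction along the repair sequence, with the $4$-cycle condition ruling out a reused helper for the current node $N_i$, the $6$-cycle condition limiting each previously repaired node's repair group to at most one holder of the next block, and the identical count $(i-1)+(\rho-1-i)=\rho-2$ of blocked holders. Your explicit treatment of the degenerate case $\beta\in\{\gamma_1,\gamma_2\}$, where the hexagon collapses to a $4$-cycle, is a small point of added care that the paper leaves implicit, but it does not change the argument.
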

\begin{proof}
Suppose that $G_{\C}$ is as above.
Properties~\ref{frdef:size} and~\ref{frdef:rep} of \abbr codes follow for $\C$ directly from the definition of $G_{\C}$. 

We next prove Property~\ref{frdef:repair} for some multi-set $F = \{N_1,...,N_{\rho-1}\}$ of failed nodes. Suppose that $B_1$ is the first block needed by the first node $N_1 \in F$. Since $|\Gamma(v(B_1))| = \rho$ and at most $\rho-1$ of these neighbors are unavailable to repair $N_1$, there exists a valid node for $\R(B_1,N_1)$.
Now we proceed by induction.
Suppose the repair sequence ${\cal S}_t = [B_j,\R(B_j,N_{\lceil j/\alpha \rceil})]_{j = 1}^t$ for some $t<\alpha(\rho-1)$ is valid and has 
disjoint repair groups.  
Let $i = \lceil(t+1)/\alpha\rceil$, so the sequence would be continued by transferring block $B_{t+1}$ to repair $N_i$. We show that at most $\rho-1$ locations of $B_{t+1}$ are unavailable to be $\R(B_{t+1},N_{i})$ (including $N_i$ itself), leaving at least one node available such that Property~\ref{frdef:repair} holds.

First, we show that if $B_{t+1}$ is not the first block of $N_i$  to appear in the sequence (i.e. $t+1> \alpha(i-1) +1$) no node $N$ such that $B_{t+1} \in N$ has been used to repair an earlier block for $N_i$.
Indeed, suppose that there were such a block $B_h$ that appeared before $B_{t+1}$ in $N_i$; then the node $N$ also contains $B_{t+1}$ and $B_h$.
But then $v(N)v(B_h)v(N_i)v(B_{t+1})$ is a $4$-cycle, which is a contradiction.

Now suppose that in fact, each of the $\rho-1$ locations of $B_{t+1}$ other than $N_i$ is unavailable, so by the previous step each must be in $(\bigcup_{\ell \in[i-1]} \Rg(N_\ell))\cup \{N_{i+1},...,N_{\rho-1}\}$. However, if two nodes corresponding to vertices in $\Gamma(v(B_{t+1}))$ were in $\Rg(N_\ell)$ for some $N_\ell \neq N_i$, the adjacency graph would have a $6$-cycle. Then at most $i-1+((\rho-1)-i) = \rho-2$ instances of $B_{t+1}$ are accounted for in nodes other than $N_i$. Thus there is some node in $\Gamma(v(B_j))$ which is neither already used in ${\cal S}_t$ nor not yet repaired, and this node can be used to transfer $B_j$ to $N_i$ in repair. 
 This establishes the inductive step and hence Property~\ref{frdef:repair}.
%
\end{proof}

\begin{lemma}\label{lemma:frptograph}
Let $\C$ be a $\FRp$-\abbr code. Then the adjacency graph $G_{\C}$ contains no $4$- or $6$-cycles.
\end{lemma}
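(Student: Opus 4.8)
The plan is to prove the contrapositive: assuming $G_{\C}$ contains a $4$-cycle or a $6$-cycle, I will exhibit a set $F$ of $\rho-1$ failed nodes together with a valid partial repair sequence in which no helper node is reused, but which cannot be extended to repair the next block, thereby violating Property~\ref{frdef:repair}. In both cases the construction is the mirror image of the counting in Lemma~\ref{lemma:graphtoFR}: there, $4$-cycle-freeness stops a single node from being forced to serve two blocks of one failed node, and $6$-cycle-freeness stops a single repair group $\Rg(N_\ell)$ from holding two copies of a needed block. Here I instead use the cycle to \emph{manufacture} exactly one such obstruction for a chosen target block, and then fail just enough additional copies of that block to block every remaining one. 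The crucial device in both cases is to repair the ``stuck'' node \emph{early}, so that other failed copies of the target block count as blocked (failed and not yet repaired) without being spent as senders.

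For the $4$-cycle, write the cycle as $v(N_1)\,v(B_a)\,v(N_2)\,v(B_b)\,v(N_1)$, so $\{B_a,B_b\}\subseteq N_1\cap N_2$. Take $B_b$ as the target; it lies in $N_1$, $N_2$, and $\rho-2$ further nodes $Q_1,\dots,Q_{\rho-2}$. Fail $F=\{N_1,N_2,Q_1,\dots,Q_{\rho-3}\}$, which leaves $Q_{\rho-2}$ as the unique surviving external copy of $B_b$, and repair in the order $N_1,N_2,Q_1,\dots$. First repair $N_1$, serving $B_a$ from some surviving copy (one exists, since at most $\rho-3$ of its $\rho-2$ external copies are failed) and serving $B_b$ from $Q_{\rho-2}$. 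Next repair $N_2$, first serving its block $B_a$ from the now-repaired $N_1'$; because $B_b\in N_1$, this choice wastes the copy of $B_b$ held by $N_1'$. When we then attempt $B_b$ for $N_2$, every copy is blocked: $N_1'$ was just used (for $B_a$), $Q_{\rho-2}$ was used when repairing $N_1$, and $Q_1,\dots,Q_{\rho-3}$ are failed but not yet repaired. Hence $B_b$ has no valid sender, contradicting Property~\ref{frdef:repair}.

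For the $6$-cycle we may assume there is no $4$-cycle (that case is handled above). Write the cycle as $v(N_a)\,v(B')\,v(N)\,v(B)\,v(N')\,v(B'')\,v(N_a)$, so $B'\in N_a\cap N$, $B''\in N_a\cap N'$, and $B\in N\cap N'$; absence of a $4$-cycle guarantees $B\notin N_a$. Take $B$ as the target; it lies in $N$, $N'$, and $\rho-2$ others, which we label $N_b,M_1,\dots,M_{\rho-3}$. Fail $F=\{N_a,N_b,M_1,\dots,M_{\rho-3}\}$ (of size $\rho-1$ and disjoint from $\{N,N'\}$), repairing in the order listed. First repair $N_a$, choosing $\R(B',N_a)=N$ and $\R(B'',N_a)=N'$—both are surviving helpers holding the required blocks—and completing the other $\alpha-2$ blocks of $N_a$ with fresh surviving helpers; thus $\Rg(N_a)\supseteq\{N,N'\}$. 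Now attempt $B$ for $N_b$: the copies $N$ and $N'$ already lie in $\Rg(N_a)$, and $M_1,\dots,M_{\rho-3}$ are failed and not yet repaired, so no valid sender remains, again contradicting Property~\ref{frdef:repair}.

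I expect the main obstacle to be the $4$-cycle case for large $\rho$: a single $4$-cycle blocks only one extra copy of $B_b$ (through $N_1'$), so a naive attempt would need to fail all $\rho-2$ nodes $Q_j$, one more than the budget of $\rho-1$ permits. The resolution is exactly the ordering above—repairing $N_2$ second, before any $Q_j$ is restored—so the $\rho-3$ failed copies are blocked ``for free,'' the last external copy $Q_{\rho-2}$ is consumed in repairing $N_1$, and $N_1'$ is wasted via the $4$-cycle. A final routine point is to confirm the partial sequence actually reaches the stuck step: each earlier block has a surviving copy and we assign distinct senders greedily, and should this greedy assignment fail earlier, we have already produced a non-extendable valid sequence, which is all that is required. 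Together the two cases show that an \abbr code can contain neither a $4$-cycle nor a $6$-cycle, which with Lemma~\ref{lemma:graphtoFR} completes the proof of Theorem~\ref{theorem:frpgraphs}.
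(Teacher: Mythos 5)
Your proposal is sound for $\rho \geq 3$, and it splits cleanly into a case that matches the paper and a case that does not. Your $6$-cycle argument is, up to relabeling, exactly the paper's proof: fail the cycle node $N_a$ together with all $\rho-2$ holders of the opposite block $B$ outside $\{N,N'\}$, repair $N_a$ first through $N$ and $N'$, and observe that the next failed holder of $B$ has no admissible sender. Two of your refinements are genuine improvements in rigor: the explicit reduction to the $4$-cycle-free case (guaranteeing $B \notin N_a$, which the paper leaves implicit), and the ``catch-all'' observation that if the greedy completion of a partial repair ever gets stuck earlier than planned, that prefix itself already witnesses the failure of Property~\ref{frdef:repair} --- this legitimately discharges the routine extendability steps the paper glosses over. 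Your $4$-cycle argument, by contrast, takes a genuinely different route: you fail \emph{both} cycle nodes $N_1,N_2$ plus $\rho-3$ external holders of $B_b$, consume the unique surviving external copy $Q_{\rho-2}$ while repairing $N_1$, and then waste the repaired $N_1'$ on $B_a$ via sequential repair. The paper instead keeps $N_1$ alive, fails the other $\rho-1$ holders of $B_2$ (including $N_2$), repairs $N_2$ first using $N_1$ for the shared block $B_1$, and is then stuck on $B_2$. Both constructions are valid when $\rho \geq 3$, but the paper's is shorter and, notably, never needs a repaired node as a helper.

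The one genuine gap is $\rho = 2$, and it bites precisely in your $4$-cycle branch. Your failure set $F = \{N_1, N_2, Q_1, \dots, Q_{\rho-3}\}$ is ill-formed for $\rho = 2$ (the $Q$-range is negative), and even trimmed it would contain both $N_1$ and $N_2$, exceeding the budget of $\rho-1 = 1$ failures that Property~\ref{frdef:repair} permits. Yet $4$-cycles with $\rho = 2$ are a real configuration the lemma must exclude, and they do violate the property: fail $N_1$ alone, serve $B_a$ from $N_2$ (its unique other holder), and then $B_b$ has no unused holder --- which is in effect the paper's construction, and it works uniformly for all $\rho \geq 2$. So your proof needs this one-line special case added. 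In partial mitigation, note that the $6$-cycle half of the statement tacitly assumes $\rho \geq 3$ in the paper as well: for $\rho = 2$ the code $N_1 = \{B_1,B_2\}$, $N_2 = \{B_1,B_3\}$, $N_3 = \{B_2,B_3\}$ satisfies Definition~\ref{def:frplus} (under the intended reading $t < \alpha(\rho-1)$, single failures always repair greedily with distinct helpers absent $4$-cycles) even though its adjacency graph is a $6$-cycle; both your proof and the paper's silently require $\rho \geq 3$ there, and nothing can fix that since the claim is false at $\rho=2$. But the $4$-cycle exclusion does hold at $\rho = 2$, the paper's proof covers it, and yours as written does not.
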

\begin{proof}
First suppose a $4$-cycle exists in the adjacency graph $G_{\C}$ for $\C$. Let $B_1$ and $B_2$ the blocks on the $4$-cycle and $N_1$ and $N_2$ be the nodes. Now suppose all nodes containing $B_2$ except for $N_1$ fail, including $N_2$. Then $N_2$ also needs $B_1$ during repair. Greedily repair $N_2$ first, using $N_1$ for $B_1$. Then there is no way to recover $B_2$ and the graph is not an LBFR adjacency graph because \abbr Property~\ref{frdef:repair} does not hold.

Now suppose a $6$-cycle exists in $G_{\C}$, with blocks $B_1,B_2$, and $B_3$ and nodes $N_1,N_2$ and $N_3$. Suppose the $\rho-2$ neighbors of $B_3$ outside the cycle fail, as well as $N_1$. Repair $N_1$ using $N_2$ for $B_1$ and $N_3$ for $B_2$. Then $\rho-2$ failed nodes remain which need $B_3$, but no unused node has $B_3$ since we used $N_2$ and $N_3$ already. Thus the graph is not an LBFR adjacency graph because \abbr Property~\ref{frdef:repair} does not hold.
 \end{proof}
 \begin{figure}[h]
\centering
a.\includegraphics[width=0.21\textwidth]{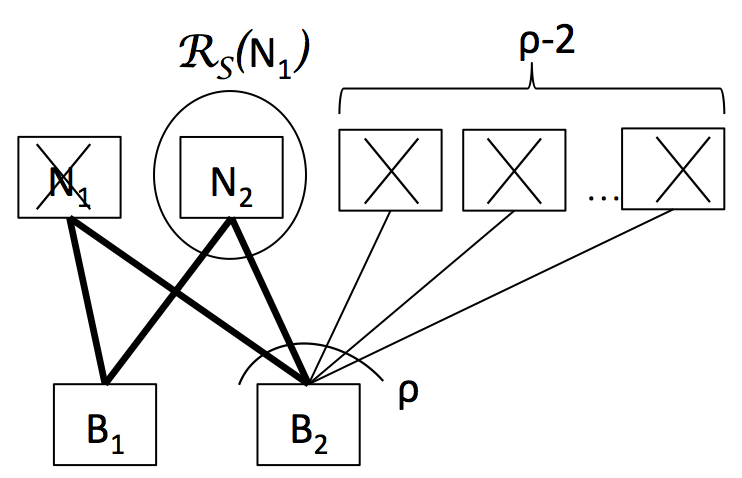}
b.\includegraphics[width=0.24\textwidth]{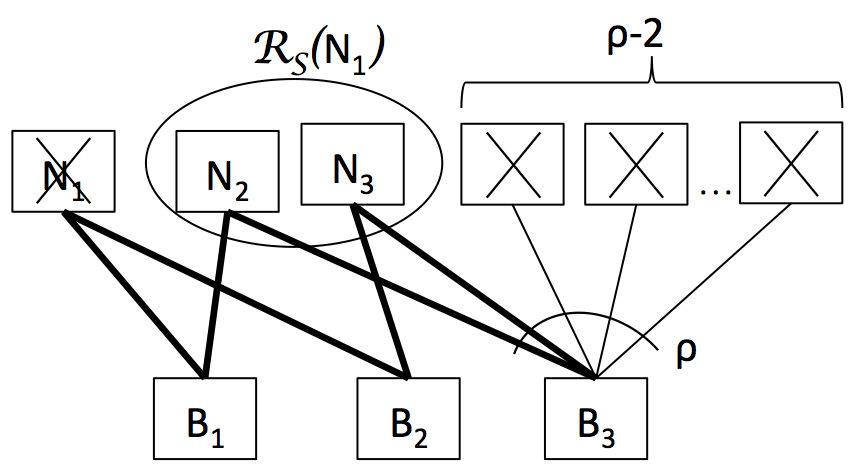}
\caption{Proof of Lemma~\ref{lemma:frptograph}: a. failure of \abbr Property~\ref{frdef:repair} with 4-cycle b. failure of \abbr Property~\ref{frdef:repair} with 6-cycle}
\label{fig:coderel}
\end{figure}

\subsection{Storage Capacity}\label{ssec:good}
In this section, we show that any \abbr code is universally good: that is, the storage capacity exceeds (in fact, strictly exceeds) that of any MBR code with the same parameters.
We recall that
\begin{equation}\label{eq:storage}
 M_{\C}(k) := \left|\min_{I \subset [n], |I| = k}{\bigcup_{i \in I} N_i}\right|
\end{equation}
denotes the storage capacity of an \abbr code $\C$.
%

\begin{thm}\label{theorem:good}
$M_{\C}(k) > C_{MBR}(n,k,d)= k\alpha - {k\choose 2}$.
\end{thm}
\begin{proof}
Let $\C$ be a \abbr code with adjacency graph $G_{\C}$. 
Suppose that $I$ minimizes the right hand side of \eqref{eq:storage}.  Without loss of generality and for notational convenience, we will assume $I = \{1,\ldots,k\}$.
Let $r$ denote the number of block duplications in $I$: that is, $r = \sum_i|N_i| - |\cup_{i \in I} N_i| = k\alpha -M_{\C}(k)$. 
We will show $r < {k \choose 2}$ for any $G$ without $4$ or $6$ cycles; this will imply that ${k \choose 2} > r = k\alpha - M_{\C}(k)$, and thus $M_{\C}(k) > C_{MBR}(n,k,\alpha)$.
As a result of Property~\ref{frdef:repair} in the definition of \abbr codes and Lemma~\ref{lemma:graphtoFR}, any two nodes in $\{N_1,\ldots,N_k\}$ share at most one block.  Thus the maximum number of duplications possible is $r \leq {k \choose 2}$.  
In fact, we must have $r < {k \choose 2}$.  To see this, assume that $r = {k \choose 2}$.  Then in fact any two nodes in $N_1,\ldots,N_k$ share exactly one block.
Consider any three nodes, w.l.o.g. suppose they are $N_1,N_2$, and $N_3$. Then if $\{B_1\} = N_1\cap N_2$, $\{B_2\} = N_2\cap N_3$, and $\{B_3\} = N_3\cap N_1$, for data blocks $B_1,B_2,B_3$, the cycle $v(N_1)v(B_1)v(N_2)v(B_2)v(N_3)v(B_3)v(N_1)$ is a 6-cycle.  This contradicts Theorem~\ref{theorem:frpgraphs}.
Thus $r < {k \choose 2}$, and so $k\alpha - M_{\C}(k) < {k \choose 2}$, as desired.
\end{proof}
\


\section{Supacketization and Repair Time}\label{sec:subpack}
So far, we have shown that LBFR codes are useful when we have discrete data blocks and do not want to ever have a node send more than one block. In this section we consider a new communication model, in which subpacketization is allowed. It turns out that the LBFR property is also useful here, and it allows for better repair time. 

We consider a model where a repaired node can begin acting as a helper node while it is being repaired.  If each block consists of $T$ packets, and each node can send only one packet at each timestep, then when $T$ is large the repair time for LBFR codes is $T(1 + o(1))$, assuming $n, \alpha, \rho$ are fixed and $T \to \infty$ (Theorem~\ref{thm:lbfrupper}). 
On the other hand, the repair time for a general FR code could be a constant factor larger than that (Theorem~\ref{thm:frnoforward}). 

\subsection{Communication Model}

Here, we specify the communication model under which the results described above hold.  We assume that each data block $B_1, \ldots, B_\theta$ consists of $T$ packets.  Time is discrete, and each packet can be transmitted in one unit of time.   Some nodes are designated as \em helper nodes, \em as will be described below.  In any time step, each helper node  may choose to send at most one packet that it holds to any other node in the system.  

As before, we consider a situation where there are $n$ nodes, so that node $i$ holds a set of data blocks with indices $N_i \subset [\theta]$ of data blocks.  We suppose that some set $S \subset [n]$ of nodes fail completely, losing all of their data.  Replacement nodes (also called $S$) are set up to recover this data.  The \em repair time \em of a repair scheme in is the number of time steps until the last node $i \in S$ recovers all of the data blocks indexed by $N_i$, where communication is according to the model described above.

We consider two cases of designating nodes as helper nodes.  In the first case, any nodes (including the replacement nodes $S$) can be helper nodes; the only requirement is that a node must receive data before sending it along.  In this case, we say that \em forwarding \em is allowed.  In the second case, only the nodes which are \em not \em replacement nodes (that is, nodes indexed by $[n] \setminus S$) can be helper nodes.  In this case, we say that forwarding is not allowed.

\subsection{Node Graph}
We next define a graph derived from the adjacency graph of  FR codes and LBFR codes, which will be used in analyzing repair time.
\begin{definition}
Let $\C$ be an LBFR with adjacency graph $G = (\V_1,\V_2,E)$. Define the \emph{node graph}, $\Gp$, of $\C$ as follows. Let $V(\Gp) = \V_1$. Let $E(\Gp) = \{(v_i,v_j)\in \V_1\times \V_1: N_i\cap N_j \neq \emptyset\}$. That is, there is an edge between nodes $i$ and $j$ if those two nodes have a block in common.
\end{definition}
\begin{definition}
 Let $\poS$ denote the outer boundary of a node set $S \subset V(G)$ for some graph $G$:
 $$\poS = \{v \in  V(G)\setminus S: \exists u \in S\text{ s.t.  } (u,v) \in E(G)  \}$$

\end{definition}
 Note that for graph $\Gp$ with some set of failed nodes $S$, $\poS$ represents the set of nodes in $V(\Gp)$ which can be used as helper nodes in the model where forwarding is not allowed.
We then define the following measure (similar to the Cheeger constant) over a graph $\Gp$:
\begin{definition}
Let $h_{\rho}(\Gp)$ be  defined as:
$$h_{\rho}(\Gp) = \displaystyle\min_{\substack{S\subseteq V(\Gp)\\1\leq|S|\leq  \rho-1}} \frac{|\partial_{out} S|}{|S|}$$
 \end{definition}
 We require that $|S|\leq \rho-1$ to guarantee that if the nodes of $S$ are in the set of a failed nodes, some repair sequence exists.  This guarantee comes from the fact that all blocks are replicated $\rho$ times, so up to $\rho-1$ node failures still leave a copy of each block available for repair.

\subsection{Lower Bounds for General FR Codes}\label{sec:frlower}
In order to motivate our upper bounds for LBFR codes, we first show lower bounds for general FR codes. We show that general FR codes $\mathcal{C}$ with small $h_{\rho}(G_{\mathcal{C}}')$ have a large worst-case repair time.  Our main theorem shows that this is the case when forwarding is not allowed.  However, as we show below, there exist FR codes where forwarding cannot help, which implies that FR codes can have a large worst-case repair time even if forwarding is allowed.

\begin{thm}\label{thm:frnoforward}
Let $\mathcal{C}$ be an $(n,\alpha,\rho)$-FR code with node graph $G_{\mathcal{C}}'$, and assume the communication model above with no forwarding, where each data block consists of $T$ packets.
Then for any repair algorithm, there exists a set $S \subseteq [n]$ of size at most $\rho -1$ so that repairing the failed set $S$ requires repair time at least
\[ T \cdot \left\lceil \frac{\alpha}{h_{\rho}(\Gp)} \right\rceil. \]
\end{thm}

In some FR codes, the restriction of no forwarding allowed can be removed because there is no opportunity to forward blocks due to failed nodes not sharing any required blocks. Thus this setting represents the worst case failure patterns and repair schemes for those codes under a model where forwarding \emph{is} allowed.
\begin{prop}\label{prop:nofwdpossible}
There exist $(n,\alpha,\rho)$-FR codes, for some $n, \alpha, \rho$,  in which some sets of $\rho-1$ node failures make forwarding impossible, because no failed nodes are requesting any of the same blocks.
\end{prop}
We prove Proposition~\ref{prop:nofwdpossible} by example in Section~\ref{sec:examples}, and it immediately implies the following corollary.

\begin{cor}\label{cor:frbad}
There exist $(n,\alpha,\rho)$-FR codes, for some $n, \alpha, \rho$, so that, even when forwarding is allowed, the worst-case bound of Theorem~\ref{thm:frnoforward} still holds.
\end{cor} 


To prove Theorem~\ref{thm:frnoforward}, we first show the following property of FR codes and repair sequences.
\begin{lemma}\label{lem:minsend}
Let $\Gp = (V',E')$ be the node graph for an $(n,\alpha,\rho$)-FR code,  $\C$. For $S\subset V'$ with $|S|\leq  \rho -1$,  if $\frac{\alpha|S|}{|\poS|} > x$ for some  $x \in \Z^+$, there exists a set $S'$ of failed nodes with $|S'|\leq \rho-1$ and $S\subseteq S'$ and a repair sequence $\cS_t = [B_j,R_\cS(B_j,N_{\lceil j/\alpha\rceil})]_{j=1}^t$ for some $1\leq t <\alpha|S'|$  such that any extension of the sequence (up to length $t= \alpha|S'|$) results in a node in $\poS$ transmitting a total of $x+1$ blocks.
\end{lemma}
\begin{proof}

Assume the nodes of $S'$ have failed, including the nodes of $S$. Assume we are repairing $S \subseteq S'$ first. Consider $\poS\setminus S'$, the set of nodes that have blocks that nodes in $S$ want. 
There are $|S|\alpha$ block transfers that must  be done to complete repair of $S$, so if $\frac{\alpha|S|}{|\poS\setminus S'|}\geq \frac{\alpha|S|}{|\poS|} > x$ by the pigeonhole principle at least one of the helper nodes in $\poS$ must  transmit more than $x$ blocks to complete a repair sequence $\cS_{\alpha(|S|)}$ (and thus also a full repair sequence which repairs $S'\setminus S$ after repairing $S$).


\end{proof}

Now we can prove Theorem~\ref{thm:frnoforward}.

\begin{proof}{(Theorem~\ref{thm:frnoforward})}
Let $S_{h}$ be the set which minimizes $\displaystyle\min_{\substack{S\subseteq V(\Gp)\\1\leq|S|\leq \rho-1}} \frac{|\poS|}{|S|}$, i.e. corresponding to $h_{\rho}(\Gp)$.  Since $|S_h|\leq  \rho-1$, we can apply Lemma~\ref{lem:minsend}: if $\frac{\alpha|S_h|}{|\partial_{out}(S_h)|} > x$ for some $x \in \mathbb{Z}^+$, there is a greedy repair outcome for a set of failed nodes that includes $S_h$ such that at least one node must transfer at least $x+1$ data blocks,  requiring time $(x+1)T$. Substituting the definition of edge expansion, $\frac{\alpha}{h_{\rho}(\Gp)} \geq x$ implies there exists a case of $(x+1)T$ transfer time, or equivalently, the worst case transfer time required is at least $T\cdot \lceil \frac{\alpha}{h_{\rho}(\Gp)}\rceil$.
\end{proof}

\subsection{Bounds on Repair Time for LBFR Codes} \label{sec:lbfrbounds}
Allowing failed nodes to forward blocks they have received during repair is only useful if it can reduce the total number of blocks transmitted by the original helper nodes. Here we show that the LBFR property guarantees that forwarding will be useful,  so we can get stronger bounds on  repair time than for general FR codes without forwarding, or cases of FR codes in which forwarding is allowed but not possible. We start with an upper bound on LBFR codes, followed by a lower bound which highlights the difference between LBFR codes and FR codes in general.

We first describe how  to derive an algorithm for an LBFR with forwarding, based on a greedy LBFR repair algorithm with $T=1$.
Recall that for any multi-set of $\rho-1$ failed nodes, there are $\alpha(\rho-1)$ block transfers which must be completed, so an algorithm must assign each of these transfers to a source node. A repair sequence is created by determining a helper for each  transfer, indexed by $t$ for $1\leq t \leq \alpha(\rho-1)$.  Property~\ref{frdef:repair} of an LBFR code (Definition~\ref{def:frplus}) states that at any such step $t$, with a  valid repair sequence ${\cal S}_t = [B_j,\R(B_j,N_{\lceil j/\alpha\rceil})]_{j = 1}^t$ such that no node $\R(B_j,N_{\lceil j/\alpha\rceil})$ is repeated, there exists a node $\R(B_{t+1},N_{\lceil (t+1)/\alpha\rceil})$ not in ${\cal S}_t$ to repair the next block. Consider the class of algorithms which at each step $t$ choose any node not already used and which has the requested block. These algorithms are guaranteed to all create a complete, valid repair sequence. 

Now suppose we have $T>1$ and forwarding is allowed. We first create a repair sequence as before, where all $T$ packets of the block $B_j$ are transferred at each step $j$ and each node is used as a helper for exactly one block (i.e. all $T$ of its packets). The repair sequence gives us an assignment of exactly one block of $T$ packets and receiver for each node which acts as a helper. To do the actual repair, each helper node begins doing its assigned transfer of packets as soon as possible. Nodes that have not failed can begin at the start of repair, and nodes that are also receiving the block will forward each of the $T$ packets as they are received.

 \begin{thm}\label{thm:lbfrupper}
Let $\C$ be an $(n,\alpha,\rho)$-LBFR. Any algorithm derived as above from a greedy LBFR repair algorithm has repair time at most $T + \rho - 2$.
\end{thm}
\begin{proof}

 Let $S$ be the set of failed nodes, with $|S| \leq \rho-1$. Consider the set of block transfers during repair for some block $i$. Since the LBFR property ensures that no node sends more than one block, no node can send $i$ to more than one receiver. Thus $i$ can only be transferred along paths through the node graph $\Gp$ that do not branch. Additionally, any such path must start at a node not in $S$, and then have all other path nodes in $S$. Since $|S|\leq \rho-1$, any path has at most $\rho-1$ edges. It then takes time at most $T+(\rho-2)$ to get a packet through such a forwarding path. Since no node is sending more than one block total, all paths can operate independently and the total transfer time will be at most $T+ (\rho-2)$.
 
 \end{proof}

We next have an impossibility result when no node sends packets from more than one block; notice that this is the case for the algorithm in the upper bound above.

\begin{thm}\label{thm:lbfrlower}
Let $\C$ be an $(n, \alpha, \rho)$-LBFR code, and assume the communication model described above, where forwarding is allowed.  Then there exists a set $S$ of at most $\rho - 1$ failures such that any repair algorithm in which no node sends packets from more than one block has repair time at least
\[ T+\left \lceil \frac{\alpha}{h_{\rho}(\Gp)}\right \rceil -1. \]
\end{thm}

To prove this, we first define the following function over blocks:
\begin{definition}\label{def:srcfunc}
Let $\mathcal{C}$ be an $(n,\alpha,\rho)$-LBFR  with $\theta$ blocks, a set $S$ of node failures, and some repair scheme. Let $\mathcal{T}\subset [\theta]\times[n]\times [n]$ be the set of block transfers in the repair scheme, where $(i,a,b)\in \mathcal{T}$  represents block $i$ sent from node $a$ to node $b$. Then we recursively define the \emph{source function} $s:[\theta]\times[n]\times[n]\to[n]$  as follows:
\[s((i,a,b)) =\begin{cases}
a & a \notin S\\
s((i,c,a)) \text{ s.t. } (i,c,a)\in\mathcal{T} & a \in S

\end{cases} \]
\end{definition}
The source function evaluated on the transfer of some block $i$  from node $a$ to node $b$, $s((i,a,b))$, is the unique node $c \notin S$ where the block $i$ originated. This is well-defined because there is exactly one $(i,c,a)\in \mathcal{T}$ for each $(i,a,b)\in \mathcal{T}$ if $a \in S$. If $a \notin S$, $a$ is not receiving the block from anyone since $a$ is only acting as a helper, so there is no corresponding $(i,c,a)$ and the recurrence terminates. Figure~\ref{fig:srcfunction} shows an example of this.
 \begin{figure}[h]
\centering
 \subfloat[][]{\includegraphics[width=0.45\textwidth]{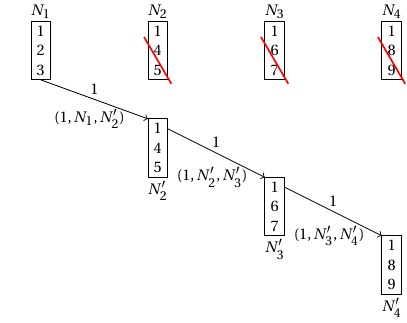}}
\\ 
 \subfloat[][]{
    \begin{tabular}{l|l}
\hline
     $(i,a,b) \in \mathcal{T}$ & $s((i,a,b))$ \\
    \hline
    $(1,N_1,N_2')$ & $N_1$  \\
     $(1,N_2',N_3')$ & $s((1,N_1,N_2')) = N_1$  \\
     $(1,N_3',N_4')$ & $s((1,N_2',N_3')) =s((1,N_1,N_2')) = N_1$  \\

    \hline
  \end{tabular}}
\caption{Example of computing the source function recursively. Only a subset of the FR code with $\rho = 4$ is shown, including failed nodes $S = \{N_2,N_3,N_4\}$ and their repair of block $1$, using helper $N_1$. For each block transfer in $\mathcal{T}$ in the chain, the function recurses until the value $N_1$ since $N_1 \in S$.}\label{fig:srcfunction}
\end{figure}


Now we can use the groups of repairs with the same source to lower bound the length of forwarding chains, and thus repair time.
\begin{proof}(Theorem~\ref{thm:lbfrlower})
Suppose we have any repair scheme  to repair a set $S$ of up to $\rho-1$ failed nodes. The corresponding set of block transfers $\mathcal{T} \subset   [\theta]\times[n]\times [n]$ then has $|\mathcal{T}| =\alpha |S| \leq \alpha(\rho-1)$. By definition of $s$, the range of the source function  $s$ evaluated on $\mathcal{T}$ is contained in $\poS$: 
\[\{a \in [n]: s(t) = a \text{ for }t \in \mathcal{T}\}\subseteq \poS.\]
  It follows from the pigeonhole principle that some set of at least $\frac{\alpha|S|}{|\poS|}$ transfers in $\mathcal{T}$ all have the same value under $s$: 
  \[\exists a \in \poS \text{ s.t. }|\{t \in \mathcal{T}: s(t) = a\}|\geq \left\lceil \frac{\alpha|S|}{|\poS|}\right\rceil\]

 Now suppose the set of failed nodes is $S_h$, which minimizes $\displaystyle\min_{\substack{S\subseteq V(\Gp)\\1\leq|S|\leq \rho-1}} \frac{|\poS|}{|S|}$, i.e. corresponding to $h_{\rho}(\Gp)$. Then we have  some  $a \in \partial_{out}(S_h)$ as above with  $|\{t \in \mathcal{T}: s(t) = a\}|\geq \lceil \frac{\alpha|S_h|}{|\partial_{out}(S_{h})|} \rceil= \lceil \frac{\alpha}{h_{\rho}(\Gp)}\rceil$. With the requirement that each node is only allowed to send at most one block, the only possible configuration of these $\lceil \frac{\alpha}{h_{\rho}(\Gp)}\rceil$ transfers is a single chain of forwarding without any branching. The chain begins with node $a$ sending the block to one of the failed nodes. The transfer of $T$ packets forwarded along this chain requires time $T + \lceil \frac{\alpha}{h_{\rho}(\Gp)}\rceil-1$.

\end{proof}

\subsection{Consequences of Theorems~\ref{thm:frnoforward},~\ref{thm:lbfrupper}, and~\ref{thm:lbfrlower}}

We compare LBFR codes to FR codes when applying the subpacketization technique described in Theorem~\ref{thm:lbfrlower}.
FR codes in general have lower bound on transfer time that remains fixed: 
\[ \text{repair time for any FR code (without forwarding)} \geq T\left\lceil \frac{\alpha}{h_{\rho}(\Gp)}\right\rceil.\]

On the other hand, in the forwarding model, LBFR codes can take advantage of the ability to forward: we have
\[T+ \left\lceil \frac{\alpha}{h_{\rho}(\Gp)}\right\rceil \leq \text{repair time for any LBFR code (with forwarding)} \leq T+(\rho-2). \]
In particular, when $T$ is large, the repair time is dominated by $T$ in the case of LBFR codes with forwarding,  regardless of code parameters $n,\alpha$, and $\rho$.  As we will show in Section~\ref{sec:examples}, there are cases where the upper and lower bound for LBFR codes are tight. The proofs of Theorems~\ref{thm:frnoforward} and~\ref{thm:lbfrupper} also illustrate how the distinction between LBFR codes and FR codes in general applies to models with forwarding and subpacketization. While the value of $\lceil \frac{\alpha}{h_{\rho}(\Gp)}\rceil$ lower bounds how many distinct blocks some node in the corresponding FR code must send, in an LBFR code it lower bounds the length of a chain of nodes forwarding a particular block.


\subsection{Examples}\label{sec:examples}
In this section, we give two examples.  The first is of a general FR code $\C$
for which forwarding is not possible, so the lower bound on repair time in Theorem~\ref{thm:frnoforward} (Section~\ref{sec:frlower}) applies even in the forwarding model; this proves Proposition~\ref{prop:nofwdpossible}. 
The second example is of an LBFR code which shows that our upper and lower bounds are tight. 

Our first example of a general FR code
is shown in Figure~\ref{fig:frnoforward}, $\C$ has $n=\theta = 7$, and $\alpha = \rho = 3$. If any two nodes which have no blocks in common fail, such as $N_1$ and $N_7$ in the picture, there are only $5$ nodes available as helpers. Since forwarding is not possible, clearly at least one of the five helpers must be used twice in the transfer of a total of $6$ blocks to complete repair. This results in a total repair time of at least $2T$. An example like this could also appear in a larger code, if other failures are repaired first leading to this pattern of available helpers at the end of the execution of some repair algorithm.

 \begin{figure}[h]
\centering
\includegraphics[width=0.3\textwidth]{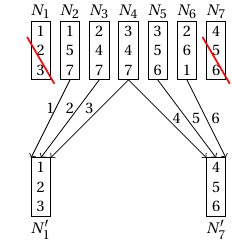}
\caption{An FR code with two failed nodes, such that repair of the failures cannot use forwarding. As a result, some helper must send two blocks for repair to be completed, and total repair time is at least $2T$.}\label{fig:frnoforward}
\end{figure}


Our second example is an LBFR code which achieves the lower bound in Theorem~\ref{thm:lbfrlower} (Section~\ref{sec:lbfrbounds}) for repair time. For this code the upper bound, Theorem~\ref{thm:lbfrupper}, is also tight. Let $\C$ be an FR code with the Tutte-Coxeter graph (the unique (3,8)-cage) as the adjacency graph. The Tutte-Coxeter  graph has 30 vertices and girth  $8$, so it specifies a $(15,3,3)$-LBFR, and we can analyze repair time allowing forwarding between failed nodes. Our lower bound evaluates to $T+\lceil\frac{2}{3}\rceil =  T+1$ and  the upper bound gives $T+ (3-2)  =  T +1$ so the bounds are tight for repair time in this case.


\section{Constructions}\label{sec:frpconstruct}

Theorem~\ref{theorem:frpgraphs} implies that any graph without $4$- or $6$-cycles gives rise to a \abbr code.  Such graphs are well-studied, and in this section we review two of these constructions and work out the implications for distributed storage.

The first construction, generalized quadrangles, have already been used to construct FR codes~\cite{silberstein2015optimal}.  This graphs already have girth $8$, which implies that the FR codes of \cite{silberstein2015optimal} are in fact LBFR codes.

The second construction, which we discuss in more detail, are the graphs of Lazebnik, Ustimenko and Woldar (LUW), \cite{lazebnik1994new}, which were also used in \cite{dimakis2014batch} in a different context to construct batch codes.  To the best of our knowledge, these graphs have not been used before to construct FR codes.  However, we show that LUW graphs yield LBFR codes which match the parameters of state-of-the-art FR code constructions arising from transversal designs~\cite{silberstein2015optimal}.

The take-away from these two examples is that in a variety of parameter regimes, there are constructions of LBFR codes which match the parameters of state-of-the-art FR codes.  Moreover, in many settings where optimal FR codes are known (in the sense that they meet the recursive bound of \cite{el2010fractional}), we also have optimal LBFR codes.  (We refer the reader to~\cite{olmez2016fractional} and the references therein for a more complete discussion of the parameters of existing FR codes.) 

%

We note that it is possible to extend the range of parameters for \abbr codes by taking covering graphs, or lifts, of graphs for smaller \abbr codes~\cite{gross1987topological}. Since lifts do not reduce girth, a lift of a  graph also describes an \abbr code. Constructing a lift also preserves repetition degree, node capacity, and disjoint repair sets, while linearly increasing numbers of data blocks and nodes. As a result, storage capacity (as a function of $k$ depending only on $\rho$ and $\alpha$) is also maintained.

\subsection{Generalized quadrangles}
\em Generalized quadrangles \em have been used in \cite{silberstein2015optimal} to construct FR codes with optimal storage capacity.   
A generalized quadrangle the case of a generalized $m$-gon for  $m = 4$. By definition,  generalized $m$-gon incidence graphs have girth of $2m$ so the incidence graph of a generalized quadrangle has girth of eight~\cite{godsil2013algebraic}.  Therefore, these FR codes are also automatically LBFR codes.

Generalized quadrangles can give optimal $((s+1)(st+1),(t+1),(s+1))$-LBFR codes in settings where $t \geq s$ so that an $(s,t)$ generalized quadrangle exists.  
It was shown in~\cite{silberstein2015optimal} (Theorem 31) that FR codes from generalized quadrangles are $k$-optimal whenever $k \leq 3$. As a result, they are optimal when $\alpha = 3$; this requires $s \leq t = 2$, and so there are only a small number of optimal examples.  However, this example does show that \abbr codes can be optimal.

%

\subsection{Lazebnik, Ustimenko, Woldar Graphs and Transversal Designs}
In this section we will describe how a construction of girth-eight graphs used for batch codes in \cite{dimakis2014batch} achieves the same storage capacity as the construction of FR codes using transversal designs in \cite{silberstein2015optimal}.  Transversal designs have adjacency graphs of girth exactly six, so they do not give LBFR codes~\cite{anderson1997combinatorial,silberstein2015optimal}. However, conditions given in \cite{silberstein2015optimal} for optimal FR codes from transversal designs can be applied to LBFR codes constructed using Lazebnik-Ustimenko-Woldar (LUW) graphs~\cite{lazebnik1994new}. 

As we elaborate on more below, the girth-eight LUW graphs give rise to $(q^{3+t},q^s,q^t)$-\abbr codes for some odd prime power $q$ and some $t\in (0,2]$, $s \in [0,1)$. In a minimal example of $q = 3$, $s = t = 1$, we have $81$ storage nodes, $81$ data blocks, and $\alpha = \rho = 3$. Without symmetric parameters, a small example is $q = 3^2$, $t = 1$, $s = 0.5$, for which we have $9^4 = 6561$ storage nodes, $9^{3.5} = 2187$ data blocks, $\rho = 9$, and $\alpha = 3$. These graphs have the advantage, compared to other constructions of LBFR codes from girth-$8$ graphs, of having parameters $(\theta,\alpha)$ independent from $(n,\rho)$ as we scale with $s$ or $t$. 


We first describe the construction of LUW graphs, and then show that  the storage capacity $M(k)$ of these codes is $M(k)\geq k\alpha-{k \choose 2} + br + \rho{b \choose 2}$, which matches the bound that \cite{silberstein2015optimal} obtain for FR codes using transversal designs.

\begin{definition}[\cite{lazebnik1994new}]\label{def:luw}
Let $s \in [0,1)$, $t \in (0,2]$ such that $q^t$ and $q^s$ are integers. Let ${\mathcal{T}}\subseteq \bF_{q^2}$ and $\mathcal{S}\subseteq \bF_q$ such that $|\mathcal{T}| = q^t$ and $|\mathcal{S}| = q^s$.  Let $L  = \{[\ell] =  (\ell_1,\ell_2,\ell_3)| \ell_1 \in \mathcal{T}\subseteq \bF_{q^2},\ell_2 \in \bF_{q^2},\ell_3 \in \bF_{q}\}$ and $P  = \{(p)=(p_1,p_2,p_3)| p_1 \in \mathcal{S}\subseteq \bF_q,p_2\in \bF_{q^2},p_3 \in \bF_q\}$. 
We define a bipartite graph $G = (\mathcal{V}_1,\mathcal{V}_2,E)$ as follows with $|\mathcal{V}_1| = q^{3+t}$ and $|\mathcal{V}_2|=q^{3+s}$:
\begin{enumerate}
 \item 
 Each vertex $[\ell] \in \mathcal{V}_1$ corresponds to to $[\ell] = (\ell_1,\ell_2,\ell_3) \in L$ and vertices $(p) \in \mathcal{V}_2$ correspond to $(p)= (p_1,p_2,p_3)\in P$. \label{itme:luwvertices}
 \item An edge $([\ell],(p)) \in E$ exists when both
$p_2 -\ell_2 = p_1\ell_1$ and $p_3 - \ell_3 =\overline{{p_1}}\ell_2 + p_1\overline{{\ell_2}}$. \label{item:luwedges}
\end{enumerate}
The function $x \mapsto \overline{x}$ denotes the involutive automorphism\footnote{The involutive automorphism $x \mapsto \bar{x}$ with fixed field $\mathbb{F}_q$ is the automorphism which fixes $\mathbb{F}_q$ and so that $\bar{\bar{x}} = x$.}  
$F_{q^2} \to F_{q^2}$ with fixed field $F_q$.
\end{definition}
Let $\{B_\ell^0, B_\ell^1,...,B_\ell^{q^t-1}\}$ denote the partition of $L$ such that for each $i = 0...,q^t-1$, all $[\ell] \in B_\ell^i$ have the same first coordinate.
\begin{thm}
For any $q,s,t,\mathcal{S}$, and $\mathcal{T}$ satisfying the requirements of Definition~\ref{def:luw}, let $G$ be a graph constructed according to the definition. Then any \abbr code with $G$ as the adjacency graph will have storage capacity $M(k) \geq k\alpha - {k \choose 2} + \rho {b \choose 2} + br $, where $k = b\rho  + r $ for $r < \rho$.
\end{thm}
\begin{proof}
Let $\mathcal{C}$ be a $(q^{3+t},q^s,q^t)$-LBFR code with adjacency graph $G = (\mathcal{V}_1,\mathcal{V}_2,E)$, where $G$ is constructed as in Definition~\ref{def:luw} for some prime power $q$ and some $t \in (0,2], s\in [0,1)$ (such  that $q^t,q^s$ are integers). Each vertex $(\ell) = (\ell_1,\ell_2,\ell_3) \in \mathcal{V}_1$ corresponds to a storage node and each vertex $(p) = (p_1,p_2,p_3) \in \mathcal{V}_2$ corresponds to a data block.

Applying item~\ref{item:luwedges} in Definition~\ref{def:luw}, it can be shown that any two distinct nodes, $[\ell^{(1)}] = (\ell^{(1)}_1,\ell^{(1)}_2,\ell^{(1)}_3)$ and $[\ell^{(2)}]= (\ell^{(2)}_1,\ell^{(2)}_2,\ell^{(3)}_3)$, such that $\ell_1^{(1)} = \ell_1^{(2)}$ cannot have a common neighbor.  In particular, if $([\ell^{(1)}],(p))\in E$ and $([\ell^{(2)}],(p))\in E$ for any $(p) = (p_1,p_2,p_3)$, then $\ell^{(1)}_1 = \ell^{(2)}_1$ implies that  $(\ell^{(1)}) = (\ell^{(2)})$.  Indeed, 
\begin{align*}
\ell^{(1)}_1 = \ell^{(2)}_1 &\implies p_1\ell^{(1)}_1 = p_1\ell^{(2)}_1 \\
&\implies p_2 - \ell^{(1)}_2 =p_2-\ell^{(2)}_2 \\
&\implies \ell^{(1)}_2 = \ell^{(2)}_2.
\end{align*}
Then using the other requirement for an edge:
\begin{align*}
\ell^{(1)}_2 = \ell^{(2)}_2 &\implies \overline{{p_1}}\ell^{(1)}_2 + p_1\overline{{\ell^{(1)}_2}} = \overline{{p_1}}\ell^{(2)}_2 + p_1\overline{{\ell^{(2)}_2}} \\
&\implies p_3-\ell^{(1)}_3 =p_3-\ell^{(2)}_3 \\
&\implies \ell^{(1)}_3 = \ell^{(2)}_3.
\end{align*}

We now use the fact that no nodes with the same first value in the tuple share a block, along with the fact that there are no 4-cycles, to lower bound the size of the neighborhood of any set of $k$ vertices in $\mathcal{V}_1$.  Recalling that $\mathcal{V}_1$ corresponds to storage nodes, this lower bound is also a lower bound on the storage capacity of the LBFR with $G$ as the adjacency graph.

There are $k \alpha$ edges adjacent to any set of $k$ vertices of $\mathcal{V}_1$, since each vertex in  $\mathcal{V}_1$ has a degree of $\alpha$ ($=q^s$). As shown in the proof of Theorem~\ref{theorem:good}, since there are no 4-cycles any two nodes share at most one neighbor and thus any set of $k$ nodes must have at least $M(k) \geq k\alpha - {k\choose 2}$ distinct neighbors.

This lower bound can next be improved using the fact that no two nodes in the same group $B_\ell^i$ for some $i\in [q^t]$ share a neighbor at all, as shown above. 

Let $S \subset \mathcal{V}_1$ be any set of $k$ nodes.  Let $S_i = S \cap B_\ell^i$ for $i = 0, \ldots, q^t - 1$.  Recall that $q^t = \rho$, so there are $\rho$ such sets $S_i$.  By the above, the are no shared neighbors between any two vertices $[\ell],[\ell'] \in S_i$ for any $i$.  Therefore, the number of pairs of points with no shared neighbors is at least
\[ \sum_{i=0}^{q^t - 1} { |S_i| \choose 2 } \geq (\rho - r){b \choose 2} + r { b+1 \choose 2} = \rho {b \choose 2} + br, \]
where the inequality follows from the fact that the quantity on the left is minimized when the $|S_i|$ are as balanced as possible.  There are $\rho$ sets $S_i$ and $|S| = k = \rho b + r$, so this occurs when there are $\rho - r$ sets of size $b$ and $r$ sets of size $b+1$.

Further, because $G$ has no four-cycles, any pair of points $[\ell],[\ell'] \in S$ have at most one shared neighbor.  This implies that the total number of neighbors of $S$ is at least
\[ k \alpha - {k \choose 2} + \rho {b \choose 2} + br. \]

\end{proof}


This lower bound is the same capacity $M(k)$ for the construction of FR codes via transversal designs in \cite{silberstein2015optimal}. 
As shown in~\cite{silberstein2015optimal},  for $\rho \geq 3$, $k = b\rho+r$, $k\leq \alpha$,  $0 \leq r \leq \rho -1$, and $\alpha > \alpha_0(k)$, the storage capacity $M(k) = k\alpha - {k \choose 2} + \rho {b \choose 2} + br$ achieves the optimal capacity, $g(k)$ for all  $k\leq \alpha$ (defined in Section~\ref{sec:relatedwork}) and is thus optimal, where 
$$\alpha_0(k)  = \begin{cases}
      \frac{b^2\rho{\rho-1\choose 2} + (\rho-2){r-1\choose  2}+b((\rho^2+1)(r-1)-\rho(3r-4))}{\rho-r+1}  \text{ if }k \not \equiv 0 \pmod{\rho}
        \\
       b(b\rho-2){\rho-1 \choose 2}+b-1 \text{ if }k  \equiv 0 \pmod{\rho}.
        \end{cases}
        $$

 Hence our construction of LBFR codes from LUW graphs is also optimal when these constraints are satisfied; in particular this is possible when $\rho=3$ and $\alpha \leq 5$.



\section{Conclusion}
We have described a strengthening of Fractional Repetition (FR) codes, called Load-Balanced Fractional Repetition (LBFR) codes, which have an additional load-balancing property compared to general FR codes when repairing more than one failed node. We show that LBFR codes are precisely characterized as FR  codes with adjacency graphs of girth eight. Moreover, we show that LBFR codes are automatically universally good, meaning they have at least the storage capacity of MBR codes. To motivate the definition of LBFR codes, we show how they lead to faster repair in the presence of sub-packetization under a natural communication model. In particular, in the model considered, where each block is made up of $T$ packets, repair takes time $T(1+o(1))$ timesteps. In contrast, a general FR code can take longer than that (a larger multiple of $T$, which depends on the expansion of the underlying graph). Finally, we describe explicit constructions of LBFR codes, including examples which achieve the optimal storage capacity for FR codes.

\bibliographystyle{plain}
\bibliography{citations}

\end{document}